\def\id#1{\ensuremath{\mathit{#1}}}
\def\idrm#1{\ensuremath{\mathrm{#1}}}
\def\idtt#1{\ensuremath{\mathtt{#1}}}
\def\ceil#1{\lceil #1 \rceil}
\newtheorem{theorem}{Theorem}
\newtheorem{lemma}{Lemma}
\newtheorem{fact}{Fact}
\newenvironment{proof}{\trivlist\item[]\emph{Proof}:}%
{\unskip\nobreak\hskip 1em plus 1fil\nobreak$\Box$
\parfillskip=0pt%
\endtrivlist}
\newenvironment{itemize*}%
  {\begin{itemize}%
    \setlength{\itemsep}{0pt}%
    \setlength{\parskip}{0pt}%
    \setlength{\parsep}{0pt}%
    \setlength{\topsep}{0pt}%
    \setlength{\partopsep}{0pt}%
  }%
  {\end{itemize}}%
\newcommand{\halfleftsect}[2]{(#1,#2]}
\newcommand{\halfrightsect}[2]{[#1,#2)}
\newcommand{\cD}{{\cal D}}
\newcommand{\cI}{{\cal I}}
\newcommand{\cV}{{\cal V}}
\newcommand{\oD}{\overline{D}}
\newcommand{\oI}{\overline{I}}
\newcommand{\tL}{\widetilde{L}}
\newcommand{\tE}{\widetilde{E}}
\newcommand{\parent}{\idrm{parent}}
\newcommand{\subs}{\idtt{subset}}
\newcommand{\desc}{\idtt{desc}}
\newcommand{\polylog}{\idrm{polylog}}
\newcommand{\ilog}{\idrm{ilog}}
\newcommand{\anc}{\idrm{anc}}
\newcommand{\ptr}{\id{ptr}}
\newcommand{\eps}{\varepsilon}
\begin{document}



\title{External Memory Orthogonal Range Reporting with Fast Updates}
\author{
Yakov Nekrich\thanks{
Department of Computer Science, 
University of Chile.
Email {\tt yakov.nekrich@googlemail.com}}
}
\date{} 
 \maketitle
\begin{abstract}
In this paper we describe data structures for orthogonal range reporting 
in external memory that support fast update operations. The query costs 
either match the query costs of the best previously known data structures
or differ by a small multiplicative factor.
\end{abstract}
\section{Introduction}
In the orthogonal range reporting problem a set of points is stored in 
a data structure so that for any $d$-dimensional query range 
$Q=[a_1,b_1]\times\ldots\times [a_d,b_d]$ all points that belong to $Q$
 can be reported.  Due to its fundamental nature and its applications,
the orthogonal range reporting problem was studied extensively;  
we refer to e.g.~\cite{GBT84,Ch88,VV96,ABR00,C11} for a small selection 
of important  
publications.  In this paper we address the issue of constructing  
dynamic data structures that support fast update operations in the 
external memory model. 


\tolerance=1000
External memory data structures for orthogonal range reporting also received 
significant
 attention, see e.g.,~\cite{RS94,SR95,VV96,ASV99,A08,N08,AAL09,N10}.  
We refer to~\cite{V01} for the definition of the external memory model and 
a survey of previous results.
In particular, dynamic 
data structures for $d=2$ dimensions are described in~\cite{RS94,SR95,ASV99}.  
The best previously known data structure of Arge, Samoladas, 
and Vitter~\cite{ASV99} uses $O((N/B)\log_2 N/\log_2\log_B N)$ blocks of 
space, 
answers queries in $O(\log_B N+K/B)$ I/Os and supports updates 
in $O(\log_B N (\log_2 N/\log_2\log_B N))$ I/Os; in~\cite{ASV99}, the authors 
also show that the space usage of their data structure is optimal.  
Recently, the first dynamic data structure that supports queries 
in $O(\log^2_B N+K/B)$ I/Os in  $d=3$ dimensions was described~\cite{N10}.

All previously described external memory data structures with 
optimal or almost-optimal query cost need 
$\Omega((\log_B N\log_2 N)/\log_2\log_B N)$ I/Os to support an 
insertion or a deletion of a point; see Table~\ref{tbl:results}. 
This  compares unfavorably with significantly lower update costs that can 
be achieved 
by  internal memory data structures. For instance, the two-dimensional 
data structure 
of Mortensen~\cite{M06} supports updates in $O(\log_2^{f}N)$ time for any
 constant $f> 7/8$. 
Moreover, the update costs of previously described external structures 
 contain an $O(\log_2 N)$ factor. Since block size $B$ can be large, achieving 
update cost that only depends on $\log_B N$ would be desirable.   
High  cost of updates is also a drawback of the three-dimensional 
data structure described in~\cite{N10}.  Reducing the cost of update 
operations can be important in the dynamic scenario when the data 
structure must be updated frequently. 

{\bf Our Results.}
We describe several data structures for orthogonal range reporting queries 
in $d=2$ dimensions that achieve lower update costs. 
We describe two data structures that 
 support queries in $O(\log_B N + K/B)$ I/Os. 
These data structures 
support updates in $O(\log_B^{1+\eps}N)$ I/Os  
with high probability 
and in $O(\log_B^2 N)$ deterministic I/Os respectively.  
Henceforth $\eps$ denotes an arbitrarily small positive constant. 
We also describe a data structure that uses $O((N/B)\log_2 N)$ blocks of 
space, answers queries in $O(\log_B N(\log_2\log_B N)^2)$ I/Os, 
and supports updates in $O(\log_BN(\log_2\log_B N)^2)$ I/Os. 
All our results are listed in Table~\ref{tbl:results}.

\begin{table}[tbp]
\centering
\resizebox{.98\textwidth}{!}
{\begin{tabular}[h]{|c|c|c|c|}
\hline
 Source        & Query  & Update & Space\\ 
               & Cost   &  Cost  & Usage  \\ \hline 
 \cite{RS94}  &  $O(\log_B N + \frac{k}{B})$       &  $O(\log_B N \log_2 N \log_2^2 B)$    & $O((N/B)\log_2 N \log_2 B \log_2\log_2 B)$    \\
\cite{SR95}  &  $O(\log_B N + \frac{k}{B}+\ilog(B))$     & $O(\log_2 N(\log_B N + (\log_B^2 N)/B)) $  &  $O((N/B)\log_2 N)$ \\
\cite{ASV99} & $O(\log_B N + \frac{k}{B})$  & $O(\log_B N \log_2 N /\log_2 \log_B N)$ & $O((N/B)\log_2 N/\log_2 \log_B N)$ \\
      *       & $O(\log_B N + \frac{k}{B})$  & $O(\log^{1+\eps}_B N )\,\dagger$  & $O((N/B)\log_2 N)$  \\
      *       & $O(\log_B N + \frac{k}{B})$   & $O(\log^{2}_B N )$ & $O((N/B)\log_2 N/\log_2 \log_B N)$ \\
      *       & $O(\log_B N(\log_2 \log_B N)^2 + \frac{k}{B})$   & $O(\log_B N(\log_2 \log_B N)^2 )$ & $O((N/B)\log_2 N)$ \\
\hline
\end{tabular}
}
\caption{New data structures and some previous results 
for $d=2$  dimensions. 
Our results are  marked with an asterisk; $\dagger$ denotes randomized 
results. The result in the first row of the table can be obtained from the
 result in~\cite{RS94} using a standard technique.
The function $\ilog(x)$ is the iterated $\log^*$ function: $\ilog(x)$  
denotes the number of times we must apply the $\log^*$ function to $x$ before 
the result becomes $\leq 2$, where 
$\log^*(x)=\min\{t\,|\,\log_2^{(t)}(x)<2\,\}$, 
and $\log_2^{(t)}(x)$ denotes the $\log_2$ function repeated $t$ times.
}
\label{tbl:results}
\end{table}
{\bf Overview.}
The situations when the block size $B$ is small and when $B$ is not so small 
are handled separately.   
If the block size is sufficiently large, $B=\Omega(\log_2^4 N)$ for an 
appropriate choice of constant, 
our construction is based  on the bufferization technique.
We show that a batch of $O(B^{1/4})$ queries can be processed 
with $O(\log_B N)$ I/Os. Hence, we can achieve constant amortized update 
cost for sufficiently large $B$.
In the case when $B$ is small, $B=O(\log_2^4 N)$, we construct the 
base tree with fan-out $\log_2^{\eps}N$ or the base tree with constant fan-out. 
Since $B=\polylog_2(N)$, the height of the base tree is bounded by 
$O(\log_B N)$ or $O(\log_B N\log_2\log_B N)$. 
Hence, we can reduce a two-dimensional query to a small number of simpler
queries.

In section~\ref{sec:3sid} we describe a data structure that supports 
three-sided reporting queries in $O(\log_B N+\frac{K}{B})$ I/Os and updates 
in $O(\frac{1}{B^{\delta}})$ I/Os if $B=\Omega(\log_B^4 N)$. 
Henceforth $\delta$ denotes an arbitrary positive constant, such that 
$\delta\leq 1/4$.
In Appendix A, we generalize this result and obtain a data structure 
that supports updates in $O(1)$ I/Os and orthogonal range reporting 
queries in $O(\log_B N+\frac{K}{B})$ I/Os if $B=\Omega(\log^4_2 N)$. 
Thus if a block size is sufficiently large, 
 there exists a data structure with optimal query cost and $O(1)$ amortized 
update cost. We believe that this result is of independent interest.  
Data structures for $B =O(\log^4_2 N)$ are described in
 section~\ref{sec:smallb}.


\section{Three-Sided Range Reporting for $B=\Omega(\log_B^4 N)$}
\label{sec:3sid}
Three-sided queries are a special case of two-dimensional 
orthogonal range queries. The range of a three-sided query 
is the product of a closed interval and a half-open interval. 
In this section we assume that the block size $B\geq 4h\log_B^4 N$
for a constant $h$ that will be defined later in this section.
Our data structure answers three-sided queries with $O(\log_B N + K/B)$ 
I/Os  and updates are supported in $O(1/B^{\delta})$ amortized I/Os. 

Our approach is based on a combination of external priority 
tree~\cite{ASV99} with buffering technique~\cite{A03}. Buffering was 
previously used to answer searching and reporting problems in one dimension. 
In this section we show that buffering can be applied to three-sided range 
reporting problem in the case when $B=\Omega(\log_B^4 N)$. 
At the beginning, we describe the external priority tree~\cite{ASV99} data 
structure. 
Then, we show how this data structure 
can be modified so that a batch of $B^{\delta}$ updates can be processed 
in constant amortized time.  Finally, we describe the procedure for reporting 
all points in a three-sided range $Q=[a,b]\times\halfrightsect{c}{+\infty}$. 

The following Lemma is important for our construction.
\begin{lemma}\label{lemma:small3sid}
A set $S$ of $O(B^{1+\delta})$ points can be stored in a data structure that 
supports three-sided reporting queries in $O(K/B)$ I/Os, where 
$K$ is the number of points in the answer; this data structure can 
be constructed with $O(B^{\delta})$ I/Os.
\end{lemma}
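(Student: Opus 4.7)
The plan is to instantiate the external priority search tree of Arge, Samoladas, and Vitter~\cite{ASV99} on the given $N' = O(B^{1+\delta})$ points and verify that both its query and construction bounds collapse to what the lemma demands once $N'$ is only polynomially larger than $B$.

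First I would sort the points by $x$-coordinate. Since $N'/B = O(B^{\delta})$, the standard external-memory merge-sort (under the usual tall-cache assumption already in force for the ASV construction) uses $O((N'/B)\log_{M/B}(N'/B)) = O(B^{\delta})$ I/Os. On the sorted sequence I would then bulk-load a fan-out-$\Theta(B)$ priority search tree bottom-up: the leaves are the $O(B^{\delta})$ sorted blocks, and each internal node holds, packed into $O(1)$ blocks, the $\Theta(B)$ points of largest $y$-coordinate in its subtree that are not already stored at an ancestor. This single bottom-up pass reads and writes each block $O(1)$ times, giving another $O(B^{\delta})$ I/Os, so the total construction cost is $O(B^{\delta})$.

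For a three-sided query $Q = [a,b] \times \halfrightsect{c}{+\infty}$ I would run the usual priority-search-tree descent: at each visited node, read its $\Theta(B)$ stored points in $O(1)$ I/Os, report those that fall inside $Q$, and recurse only into children whose maximum $y$-value is at least $c$. Every visited node either lies on one of the two root-to-leaf paths determined by $x=a$ and $x=b$, or it contributes $\Theta(B)$ reported points. Because $N' = O(B^{1+\delta})$ forces the tree height to be $\log_B N' = O(1)$, the boundary paths account for only $O(1)$ I/Os, while the remaining visited nodes sum to $O(K/B)$. Hence the query costs $O(K/B)$ I/Os in the standard $O(\lceil K/B\rceil)$ sense.

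The main obstacle I expect is arithmetic rather than algorithmic: one has to verify that the generic $O(\log_B N' + K/B)$ query bound and the sort/bulk-load bound really do degenerate to $O(K/B)$ and $O(B^{\delta})$ for this tiny $N'$, which hinges on the tall-cache assumption implicit in the ASV framework. No new ideas appear to be required beyond quoting~\cite{ASV99}.
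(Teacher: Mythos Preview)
Your proposal is correct and takes essentially the same approach as the paper: the paper's entire proof is the single sentence ``We can use the data structure of Lemma~1 from~\cite{ASV99},'' and your write-up is precisely an unpacking of what that citation yields when $N'=O(B^{1+\delta})$, namely a constant-height external priority search tree with $O(K/B)$ query cost and $O(B^{\delta})$ construction cost. Your caveat about the tall-cache assumption for the sorting step is fair, but it is the same assumption implicit in quoting~\cite{ASV99} directly.
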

\begin{proof}
We can use the data structure of Lemma 1 from~\cite{ASV99}.
\end{proof}

{\bf External Priority Tree.}
Leaves of the external priority tree contain the  $x$-coordinates of points 
in sorted order. 
Every leaf contains $\Theta(B)$ points and each internal node 
has $\Theta(B^{\delta})$ children. 
We assume throughout this section that the height of an external priority 
tree is bounded by $h\log_B N$.  
The \emph{range} $rng(v)$ of a node $v$ is the interval bounded by 
the minimal and the maximal 
coordinates stored in its leaves; we say that a point $p$ belongs to 
(the range of) a node $v$ if its $x$-coordinate belongs to the range of $v$.
Each node is associated with a set $S(v)$, $|S(v)|=\Theta(B)$, defined as 
follows. 
Let $L(v)$ denote the set of all points that belong to the range of $v$. 
The set $S(v)$ contains $B$ points with largest $y$-coordinates 
among all points in $L(v)$ that do not belong to any set $S(w)$, where 
$w$ is an ancestor of $v$. Thus external priority tree is a modification 
of the priority tree 
with node degree $B^{O(1)}$, such that each node contains $\Theta(B)$ points. 
  
The data structure $F(v)$ contains points from $\cup S({v_i})$ for all 
children $v_i$ of $v$. By Lemma~\ref{lemma:small3sid}, $F(v)$ 
supports three-sided queries in $O(1)$ I/O operations. 
Using $F(v)$, we can answer three-sided queries in $O(\log_B N + K/B)$ 
I/Os; the  search procedure is described in~\cite{ASV99}.  \\

{\bf Supporting Insertions and  Deletions.} 
Now we describe a data structure that supports both insertions and deletions. 
We will show below how a batch of inserted or deleted points can be processed 
efficiently. The main idea is to maintain buffers with inserted and deleted 
points in all internal nodes. The buffer $D(v)$, $v\in T$, contains  
points that are stored  in descendants of $v$ and must be deleted. The buffer 
$I(v)$, $v\in T$, contains points that must be inserted into sets $S(u)$ for 
a descendant $u$ of $v$. 
A buffer can contain up to $B^{3\delta}$ elements. 
When a buffer $I(v)$ or $D(v)$ is full, we flush it into the 
children $v_j$ of $v$; all sets $I(v_j)$, $D(v_j)$, and $S(v_j)$ are updated
accordingly.

Definitions of $S(v)$ and $F(v)$ are slightly modified for the dynamic
 structure. 
Every set $S(v)$ contains at most $2B$ points. If $S(v)$ contains less than 
$B/2$ points than $S(v_i)=\emptyset$ for each child $v_i$ of $v$.
The data structure $F(v)$  contains all points from $S(v_j)\cup I(v_j)$ for 
all children $v_j$ of $v$. 
We store an additional data structure $R(v)$ in each internal node. 
$R(v)$ contains all points from $\cup S(v_i)$ for all children $v_i$ of $v$. 
$R(v)$ can be constructed in $O(B^{\delta})$ I/Os; we can obtain $B^{3\delta}$ 
points with highest $y$-coordinates stored in $R(v)$ in $O(1)$ I/Os.
Implementation of $R(v)$ is very similar to implementation of $F(v)$; details 
will be given in the full version.

Suppose that all points from the set   $\cD$, $|\cD|=O(B^{\delta})$, 
 must be deleted.
We remove all points from $\cD\cap S(v_r)$ and $\cD\cap I(v_r)$ 
from $S(v_r)$ and $I(v_r)$ respectively. We set $D(v_r)=D(v_r)\cup 
(\cD\setminus S(v_r))$. When $D(v)$ for an internal node 
$v$ is full, $|D(v)|=B^{3\delta}$,  we distribute the points of 
$D(v)$ among the children $v_j$ of $v$. Let $D_j(v)$ be the points of 
$D(v)$ that belong to the range of $v_j$ and update $S(v_j)$, $D(v_j)$ 
as described above:
We remove all points from $D_j(v)\cap S(v_j)$ and $D_j(v)\cap I(v_j)$ 
from $S(v_j)$ and $I(v_j)$ respectively. All  points of 
$D_j(v)\setminus S(v_j)$ are inserted into $D(v_j)$. 
Finally, we update $F(v)$ and $R(v)$. 


We can insert a batch $\cI$ of $B^{\delta}$ points using a similar procedure. 
Initially, all points from $\cI$ are inserted into 
buffer $I(v_r)$ or $S(v_r)$ and points of $\cI \cap D({v_r})$ are removed 
from $D({v_r})$. Let $S'(v_r)=S(v_r)\cup \cI$ and let 
$S''(v_r)$ be the set of $B$ points with highest $y$-coordinates in $S'(v_r)$. 
We set $S(v_r)=S''(v_r)$ and $I(v_r)=I(v_r)\cup (S'(v_r)\setminus S''(v_r))$. 
When the buffer $I(v)$ in an internal  node $v$ is full, 
$|I(v)|\geq B^{3\delta}$, we update the sets $S(v_j)$ and $I(v_j)$ in the 
children $v_j$ of $v$. Let $I_j(v)$ be the set of points in $I(v)$ 
that belong to the range of $v_j$. Let $S'(v_j)=S(v_j)\cup I_j(v)$
and let $S''(v_j)$ be the set of $B$ points with the highest $y$-coordinates 
in $S'(v_j)$. We set $S(v_j)=S''(v_j)$, 
$D(v_j)=D(v_j)\setminus (D(v_j)\cap I_j(v))$, and 
$I(v_j)=I(v_j)\cup (S'(v_j)\setminus S''(v_j))$. 
The data structures $F(v)$ and $R(v)$ are updated accordingly.

When a buffer $I(v)$ is full, we can re-build all $I(v_j)$, $D(v_j)$, $S(v_j)$ 
and the data structures $F(v)$, $R(v)$  in $O(B^{\delta})$ I/Os. 
Each inserted point is inserted in $O(\log_B N)$ buffers $I(v)$. 
Hence, an amortized cost of re-building secondary data structures 
caused by an insertion is $O(B^{\delta}\log_B N/B^{3\delta})=O(1/B^{\delta})$. 
The cost of a deletion can be analyzed in the same way.

We  also take care that the number of points stored in 
sets $S(u)$ is not too small. 
Suppose that the number of points in some $S(w)$ is smaller than $B/2$ when 
$D(\parent(w))$ is emptied. If $w$ is a leaf or $S(w_j)=\emptyset$ for 
all children $w_j$, we do not need to rebuild $S(w)$.
Otherwise,  we move some  points from $S(w_j)$  into $S(w)$.  
Using the data structure $R(w)$, we identify 
$B-|S(w)|$ points with the highest $y$-coordinates in 
$\cup_j S(w_j)$.  These points are removed from $R(w)$, $F(w)$, $S(w_j)$ 
and inserted into $S(w)$. We also update $F(\parent(w))$ and 
$R(\parent(w))$.  For every child $w_j$ of $w$, we recursively call the 
same procedure. The total cost of updating all data structures in a node  is 
$O(B^{1-3\delta})$. Using standard analysis, we can show that maintaining 
the size of $S(w)$ incurs an amortized cost $O(1/B^{\delta})$.

Besides that, we should  take care that each leaf contains $x$-coordinates 
of at most $2B$ points. To maintain this invariant, 
the external priority  tree is implemented as a WBB-tree~\cite{AV03}. 
The branching parameter of our  WBB-tree equals  to $B^{\delta}$ and the 
leaf parameter equals to $B$.
When the total number of points stored in all descendants of 
a node $u$ equals to  $2B^{\ell\delta}\cdot B$, 
we split the node $u$ into $u'$ and $u''$. 
A node on level $\ell$ is split at most once after a series of 
$\Theta(B^{\ell\delta}\cdot B)$ insertions. 
When a node is split, we assign each element of $S(u)$, $I(u)$, and 
$D(u)$ to the corresponding set in $u'$ or $u''$. 
As a result, either $S(u')$ or $S(u'')$ may contain less than $B/2$
elements. In this case, we move the points from descendants of 
$u'$ into $u'$ (from descendants of $u''$ into $u''$) as described 
above. The total amortized cost of  splitting a node is $O(1/B^{\delta})$.

{\bf Answering Queries.} 
Consider a query $Q=[a,b]\times \halfrightsect{c}{+\infty}$. 
Let $\pi$ denote  the set of all nodes that lie 
on the path from the root to $l_a$ or on the path from the root 
to $l_b$, where $l_a$ and $l_b$ are the leaves that contain $a$ and $b$ 
respectively. Then all points inside the range 
 $Q$ 
are stored in sets 
$S(v)$ or $I(v)$, where the node $v$ belongs to $\pi$ or $v$ is a descendant 
of a node that belongs to $\pi$. 
Two following facts play crucial role in the reporting procedure. 
\begin{fact}\label{fact:ord}
Let $w$ be an ancestor of a node $v$.
For any $p\in S(v)$ and $p'\in S(w)$, $p.y < p'.y$. 
For any $p\in I(v)$ and $p'\in S(w)$, $p.y < p'.y$. 
\end{fact}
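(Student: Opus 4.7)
Since the fact is a structural invariant of the dynamic priority tree, the plan is to prove it by induction on the sequence of update operations. It is convenient to strengthen the statement to a single invariant
\[
\text{(P)}\quad p\in S(v)\cup I(v),\; p'\in S(w),\; w \text{ a strict ancestor of } v \;\Longrightarrow\; p.y<p'.y,
\]
maintained alongside the local companion (Q): in every node $v$, every element of $I(v)$ has smaller $y$-coordinate than every element of $S(v)$. Clearly (P) implies both parts of the fact.

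The engine behind the induction is the observation that every modification of $S(\cdot)$ or $I(\cdot)$ described in this section follows a single rule: at the affected node $v$ one forms $S'(v)=S(v)\cup X$ for an incoming batch $X$, sets the new $S(v)$ to the $B$ points of $S'(v)$ with largest $y$-coordinate, and appends the remainder to $I(v)$. This preserves (Q) at $v$ by construction. Moreover, any newly admitted element $p'\in X$ in the new $S(v)$ must have displaced some $q\in S'(v)$ with $p'.y\ge q.y$, and the inductive hypothesis applied to $q$ propagates the bound to the new $S(v)$ against any descendant's $T=S\cup I$ set. The batch $X$ itself arrives as $\cI$ at the root or as a flush from $I(\parent(v))$; in the flush case the inductive hypothesis at $\parent(v)$ already bounds every element of $X$ against all strict ancestors of $v$.

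The cases to verify are: (i) bulk insertion $\cI$ at the root $v_r$; (ii) a flush of a full $I(v)$ to the children $v_j$ with the recomputation of each $S(v_j)$ and $I(v_j)$; (iii) a flush of a full $D(v)$, which only deletes points and therefore cannot invalidate either inequality; (iv) the pull-up that refills an under-full $S(w)$ from $\bigcup_j S(w_j)$; and (v) the WBB-tree split, which merely partitions $S(u)$, $I(u)$, $D(u)$ by $x$-coordinate and so introduces no new $y$-comparisons. For (i) and (ii) the top-$B$ argument supplies the bound at the affected level, and (Q) one level below supplies the bound on the $I$-buffers.

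The main obstacle is the pull-up step (iv): here a point physically migrates upward from some $w_j$ to $w$, so one must check both that it still lies strictly below every ancestor of $w$ and strictly above every $T$-set in the subtree rooted at $w_j$. The first half is inherited, because every ancestor of $w$ is also an ancestor of $w_j$, and the promoted points already satisfied (P) while in $S(w_j)$. For the second half, the promoted set consists of the top $B-|S(w)|$ entries of $\bigcup_j S(w_j)$ in $y$-order, so (Q) at $w_j$ places every element of $I(w_j)$ strictly below it, and (P) at $w_j$ places every element of $S(v)\cup I(v)$ for every deeper descendant $v$ strictly below it. Together these close the induction.
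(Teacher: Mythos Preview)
The paper does not give a proof of this fact; it is asserted as a structural invariant of the buffered external priority tree and then used directly in the query analysis. Your inductive verification over the sequence of update operations, maintaining (P) together with the local companion (Q), is therefore considerably more than what the paper supplies, and it is the natural way to establish such an invariant. The case split you give (root insertion, $I$-flush, $D$-flush, pull-up, WBB split) matches exactly the operations the paper describes, and your treatment of the pull-up step (iv), which you correctly single out as the delicate case, is sound: promoted points inherit the upper bound from every strict ancestor of $w$ because every ancestor of $w$ is also an ancestor of $w_j$, and they dominate the remainder of the subtree below $w_j$ by (P) and (Q) at $w_j$.

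One technical point deserves attention. Your displacement argument for cases (i) and (ii) (``any newly admitted $p'\in X$ must have displaced some $q\in S'(v)$ with $p'.y\ge q.y$, and the inductive hypothesis applied to $q$ propagates the bound'') tacitly needs the old $S(v)$ to contain at least $B$ points, so that taking the top $B$ of $S'(v)=S(v)\cup X$ cannot lower $\min S(v)$. The paper's stated size invariant only guarantees $|S(v)|\ge B/2$ when the children's sets are nonempty; in the window $B/2\le |S(v)|<B$ an incoming point of small $y$-coordinate could enter the new $S(v)$ without displacing any old element, and your argument as written does not cover this. This is really a loose end in the paper's description (the refill procedure restores $|S(w)|$ to $B$, so the intended invariant is $|S(v)|\ge B$ whenever the subtree below is nonempty, with only $D$-flushes able to temporarily decrease it) rather than a defect in your proof strategy; but you should make that size assumption explicit and argue that the $I$-flush and the refill step together maintain it.
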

\begin{fact}\label{fact:del}
Suppose that a point $p\in S(v)$ is deleted from $S$ (but $p$ is not deleted 
from $S(v)$ yet). Then, $p$ belongs to a set $D(w)$ for an ancestor
 $w$ of $v$.
\end{fact}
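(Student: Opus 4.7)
The plan is to track the lifecycle of the deletion token for $p$ through the $D$-buffers. When the batch $\cD$ containing the deletion of $p$ is first processed at the root, since $p \in S(v)$ and each point lies in at most one $S$- or $I$-set, we must have $p \notin S(v_r) \cup I(v_r)$ (otherwise $v = v_r$, and the procedure would remove $p$ from $S(v_r) = S(v)$, contradicting the hypothesis that $p$ is not yet deleted from $S(v)$). Thus the token is inserted into $D(v_r)$, and since $v_r$ is an ancestor of $v$, this establishes the base case of the invariant I want to maintain.

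Next I would argue by induction on the subsequent operations that the deletion token for $p$ always resides in some $D(w)$ with $w$ an ancestor of $v$. The only operation that moves the token itself is a flush of some $D(u)$: it transfers the token to $D(u')$, where $u'$ is the unique child of $u$ whose range contains $p.x$, unless the flush finds $p \in S(u') \cup I(u')$ and removes it there (completing the deletion). Since $p.x \in rng(v)$, the successive children $u'$ trace the root-to-$v$ path. The token cannot have proceeded past $v$: at the step $u' = v$ the flush would find $p \in S(v)$ and remove it, contradicting the hypothesis. Hence the token currently resides in $D(w)$ for some $w$ that is an ancestor of $v$ on this path.

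The main obstacle is ruling out the possibility that some other operation displaces $p$ itself so that the node $w$ holding the token ceases to be an ancestor of the $S$-set containing $p$. The two relevant operations are $I$-buffer flushes, which can push $p$ deeper in the tree (from $S(u)$ to $I(u)$ and then to $S(u_j) \cup I(u_j)$ for a child $u_j$), and the post-$D$-flush promotion that refills an under-full $S(u)$ by drawing from $\cup_j S(u_j)$. In both cases the movement of $p$ is confined to the subtree rooted at the node currently holding the token: $I$-cascades only move points downward, and the recursive promotion triggered by the flush of $D(w)$ operates strictly inside the subtree rooted at the child to which the token was just transferred. Hence the current holder $w$ remains an ancestor of $p$'s new location, which together with the argument above yields Fact~\ref{fact:del}.
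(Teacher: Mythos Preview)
The paper states Fact~\ref{fact:del} (and Fact~\ref{fact:ord}) without proof; both are presented as immediate consequences of how the buffers $D(\cdot)$ and $I(\cdot)$ are maintained. Your proposal therefore does not conflict with any argument in the paper---it supplies the justification the paper omits, and the inductive tracking of the deletion token down the root--to--$v$ path is exactly the natural way to do it.

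One small remark on your third paragraph: promotion moves points \emph{upward}, so after the flush of $D(\parent(w))$ has deposited the token in $D(w)$, the ensuing promotion could in principle lift $p$ from some $S(w_j)$ all the way into $S(w)$, leaving the token in $D(w)$ with $p\in S(w)$. Your sentence ``the current holder $w$ remains an ancestor of $p$'s new location'' is then correct only under the non-strict reading of ``ancestor''. That reading is consistent with the paper's later use of the fact (the definition of $DEL(v)$ and the bound $h\cdot B^{3\delta}\log_B N$ are unaffected by whether $v$ itself is counted), so this is a convention issue rather than a gap.
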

We set the value of the constant $h$ so that  the height of $T$ 
does not exceed $h\log_B N$. 
As follows from the Fact~\ref{fact:del}, 
 the total number of deleted  points in $S(v)$ is bounded by 
$h\cdot B^{3\delta}\log_B N\leq B/4$. 
Let $DEL(v)=\cup_{w=\anc(v)}(D(w)\setminus \cup_{w'=\anc(w)} I(w'))$, where 
$\anc(u)$ denotes an ancestor of a node $u$. To wit,  
$DEL(v)$ is the set of all points $p$, such that $p$ belongs to 
some set $D(w)$ for an ancestor  $w$ of $v$, but $p$ does not belong 
to any $I(w')$ for an ancestor $w'$ of $w$. 
By 
Fact~\ref{fact:del} all points in $S(v)\cup I(v)$ 
that are already deleted from the data structure belong to $DEL(v)$.
If the set $DEL(v)$ is known, then $DEL(v_i)$ for a child $v_i$ of $v$ 
can be constructed in $O(1)$ I/Os. Therefore we can construct  $DEL(v)$ 
for all $v\in \pi$ in $O(\log_B N)$ I/Os. 

We can output all points that belong to $Q$ using the following 
procedure. Let $\pi_1$ be the path from $l_a$ to the lowest common ancestor
 $v_l$ of $l_a$ and $l_b$. Let $\pi_2$ be the path from $l_b$ to $v_l$.
First, we examine all nodes $v\in \pi$ and report all points 
$p\in (S(v)\cup I(v)) \setminus DEL(v)$ 
that belong to $Q$; this can be done with $O(\log_B N)$ I/Os. 
All other points in $S\cap Q$ are stored in a set $S_u$ where 
$u$ is a descendant of some $v\in \pi_1\cup \pi_2$ or $u$ is a 
descendant of $v_l$. 

Consider a node $v\in \pi_2$, such that $v\not=v_l$; we will show how 
points in $S(u)\cap Q$ for all descendants $u$ of $v$ can be reported.
Suppose that the child $v_i$ of 
$v$ also belongs to $\pi_2$ and $rng(v_{i-1})=[a',b']$. 
Let $Q_v=[a,b']\times\halfrightsect{c}{+\infty}$.  For a point $p$ stored in a 
descendant $u$ of $v$ such that $u\not\in \pi_2$, $p$ belongs to $Q$ if and 
only if 
$p$ belongs to $Q_v$. All $p\in Q_v\cap S(u)$ are reported as follows. 
Initially we set $u=v$. We identify all points stored in 
$S(u_i)\cap Q_v$ or $I(u_i)\cap Q_v$ for some child $u_i$ of $u$ 
using  the data structure $F(u)$. 
Then, we process the resulting list of points and remove all 
points that belong to $DEL(u)$. 
Finally, we identify all non-leaf children $u_i$ of $u$ such that
at least $B/2$ points from $S(u_i)$ are reported. 
We visit every such $u_i$, compute $DEL(u_i)$, and recursively 
call the same procedure in $u_i$.

Our procedure reports all points in $L(v)\cap Q_v$. 
Suppose that we visited a node $u$, but the child $u_j$ of $u$ 
was not visited. All points from $(S(u_j)\cup I(u_j))\cap Q$ were 
reported when the node $u$ was visited. Since $S(u_j)$ contains at 
least $B/2$ points, at least one point $p_j\in S(u_j)$ does not belong 
to $Q$. The $x$-coordinate of $p_j$ belongs to $[a,b]$; hence, the 
$y$-coordinate of $p_j$ is smaller than $c$. By Fact~\ref{fact:ord}, 
$y$-coordinates of all points stored in $S(\nu)\cup I(\nu)$ for any  
descendant $\nu$ of $u_j$ are smaller than $c$. Hence, all points 
$p\in S(\nu)\cup I(\nu)$ are not relevant for our query.

\tolerance=1400
The search procedure spends $O(1)$ I/Os in every visited node 
(ignoring the cost of reporting points). Let $K_v$ be the total number 
of reported points in $L(v)\cap Q_v$. A node $u$ is visited if at least 
$B/4$ points from $S(u)$ were reported.
Thus we can charge at least $B/4$ points for every 
visited node. 
We can conclude that the search procedure spends $O(K_v/B)$ I/Os in 
the descendants of $v$. 
 Descendants of nodes $v\in \pi_1$, $v\not=v_l$, 
and descendants of $v_l$ can be 
processed with a similar procedure. Therefore the total query cost 
is $O(\log_B N + K/B)$. 
We obtain the following result.
\begin{lemma}\label{lemma:3sid}
Suppose that $B^{\delta}\geq 4h\log_B N$ for a constant $h$ defined above 
and some $\delta\leq 1/4$.
Then there exists a data structure that uses $O(N/B)$ blocks of space and 
answers three-sided reporting 
queries in $O(\log_B N + K/B)$ I/Os. 
The amortized cost of inserting or deleting a  point 
is $O(1/B^{\delta})$.
\end{lemma}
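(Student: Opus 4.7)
The plan is to assemble the structural description, the amortized analysis of buffer flushes, and the query-correctness argument given above into a clean proof of the three stated bounds (space, query I/O, amortized update I/O), invoking Lemma~\ref{lemma:small3sid} and Facts~\ref{fact:ord}--\ref{fact:del} as black boxes.

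First I would verify the structural and space bounds. Because the tree is a WBB-tree with branching parameter $B^{\delta}$ and leaf parameter $B$, it has $O(N/B)$ leaves and height at most $h\log_B N$ for an appropriate constant $h$. Each node stores $O(B)$ points in $S(v)$, at most $B^{3\delta}$ in each of $I(v)$ and $D(v)$, and secondary structures $F(v)$ and $R(v)$ built over the $\Theta(B^{\delta})$ children via Lemma~\ref{lemma:small3sid} on $O(B^{1+\delta})$ points. Since $\delta\le 1/4$, summing over the $O(N/B)$ nodes gives $O(N/B)$ blocks of space.

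Next I would handle the amortized update cost. A flush of a full buffer $I(v)$ or $D(v)$ redistributes $\Theta(B^{3\delta})$ points among the $\Theta(B^{\delta})$ children and rebuilds $F(v)$ and $R(v)$; by Lemma~\ref{lemma:small3sid} this rebuild costs $O(B^{\delta})$ I/Os. Thus a flush at $v$ is triggered only after $\Theta(B^{3\delta})$ points have passed through the buffer, yielding amortized cost $O(1/B^{2\delta})$ per update per level. Summing over the $O(\log_B N)$ buffers visited along the root-to-leaf path, and using the hypothesis $B^{\delta}\ge 4h\log_B N$, the total is $O(\log_B N/B^{2\delta})=O(1/B^{\delta})$. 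The node splits required by the WBB-tree invariants and the refilling of undersized $S(w)$ via $R(w)$ are amortized by the standard weight-balanced argument, each contributing $O(1/B^{\delta})$ as indicated in the overview text.

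For correctness of the query I would use Fact~\ref{fact:del} to show that the deleted-but-not-yet-erased points in $S(v)\cup I(v)$ are exactly captured by $DEL(v)$, so reporting $(S(v)\cup I(v))\setminus DEL(v)$ at every $v\in\pi$ and every recursively visited descendant outputs no spurious points. Completeness relies on the termination rule: if a child $v_i$ of a visited node $v$ is \emph{not} recursed into, then fewer than $B/2$ of the points of $S(v_i)$ were reported; since at most $h\log_B N\cdot B^{3\delta}\le B/4$ points of $S(v_i)$ can be in $DEL(v_i)$, at least one surviving $p\in S(v_i)$ fails the $y$-constraint, and by Fact~\ref{fact:ord} every point in the subtree rooted at $v_i$ also fails it, so no relevant point is missed.

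Finally, the query-cost bound follows by charging. The initial walk along $\pi$, including the incremental computation of $DEL(v)$ from parent to child, costs $O(\log_B N)$ I/Os. Each recursively visited node spends $O(1)$ I/Os on $F(u)$ plus $DEL(u)$ filtering and, by the termination rule, is charged $\Omega(B)$ reported points, so the recursion contributes $O(K/B)$. The main obstacle in turning this plan into a full proof is the bookkeeping for the $B/4$ charging argument: one must show that splits, $S(w)$-refilling, and the interplay of $I(v)$, $D(v)$, and $DEL(v)$ together preserve the invariant $|DEL(v)\cap S(v)|\le B/4$ used above, so that reported points really do dominate at every visited node.
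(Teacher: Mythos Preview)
Your proposal is correct and follows essentially the same approach as the paper: the WBB-tree with branching parameter $B^{\delta}$ and leaf parameter $B$, the buffers $I(v),D(v)$ of capacity $B^{3\delta}$, the amortization $O(B^{\delta}\log_B N/B^{3\delta})=O(1/B^{\delta})$ for flushes, the $DEL(v)$ filtering via Fact~\ref{fact:del}, and the $B/4$ charging via Fact~\ref{fact:ord} are exactly the ingredients the paper assembles. The obstacle you flag in the last paragraph is real but is handled in the paper only at the same informal level you give, so your plan is complete by the paper's own standard.
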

A  similar  approach can be used 
to construct the data structure for  general two-dimensional range reporting
 queries. 
\begin{lemma}\label{lemma:imp2dim}
Suppose that $B^{\delta}\geq 4h_1\log_2 N$ for a constant $h_1$ defined in 
Appendix A 
and some  $\delta\leq 1/4$.
Then there is a data structure that uses $O((N/B)\log_2 N/\log_2\log_B N)$ 
blocks of space and answers two-dimensional orthogonal range  reporting 
queries in $O(\log_B N + K/B)$ I/Os. 
The amortized cost of inserting or deleting a  point is $O(1)$.
\end{lemma}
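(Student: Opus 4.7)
The plan is to construct a base WBB-tree $T$ on the $x$-coordinates with branching parameter $\log_B N$ and leaf parameter $B$, so that the height is $L=O(\log_2 N/\log_2\log_B N)$ and each level occupies $O(N/B)$ blocks, matching the claimed space bound. At each internal node $v$ I would store buffered secondary three-sided structures of the kind built in Lemma~\ref{lemma:3sid}, together with an insertion buffer $I(v)$ and a deletion buffer $D(v)$ exactly as in Section~\ref{sec:3sid}. Following the standard reduction from two-dimensional range reporting to three-sided reporting used in the external priority tree, each internal node stores, for its ordered list of children, a ``left-prefix'' three-sided structure and a ``right-suffix'' three-sided structure, so that any $x$-range expressible as a union of consecutive sibling subtrees can be answered by a constant number of three-sided queries on $y$.

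For a query $Q=[a,b]\times[c,d]$, I would first locate the leaves $l_a,l_b$ and their lowest common ancestor $v_\ell$, then walk the two boundary paths $\pi_1,\pi_2$. At $v_\ell$ one issues a three-sided query on the suffix-structure of the child containing $a$ and on the prefix-structure of the child containing $b$, both restricted to $y\in[c,d]$; the recursion then continues down each boundary path, issuing an analogous constant number of three-sided queries at every node. Pending deletions are filtered using the $DEL(\cdot)$ bookkeeping already set up in Lemma~\ref{lemma:3sid}. Provided each per-level three-sided query after the initial $O(\log_B N)$ descent costs only $O(1+K_v/B)$ I/Os, the totals telescope to the target $O(\log_B N+K/B)$.

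For updates I would piggy-back the buffering scheme of Lemma~\ref{lemma:3sid} at every level of $T$. A point insertion or deletion enters the root buffer; when a buffer reaches capacity $B^{3\delta}$ it is flushed to the children, and the flushed batch is applied in bulk to the secondary three-sided structures at that level (including the necessary sliding of points between $S(v)$, $I(v)$, $D(v)$ and the rebuilding of the analogues of $F(v)$ and $R(v)$). Each point traverses $O(L)$ buffers, and each buffer contributes amortized $O(1/B^{\delta})$ I/Os by Lemma~\ref{lemma:3sid}. The hypothesis $B^{\delta}\geq 4h_1\log_2 N$, with $h_1$ chosen so that $L\leq h_1\log_2 N$, therefore reduces the per-update amortized cost to $O(L/B^{\delta})=O(1)$; the WBB-tree rebalancing cost is subsumed as in Section~\ref{sec:3sid}.

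The main obstacle I expect is ensuring that the per-level three-sided queries really cost $O(1+K_v/B)$ and not $O(\log_B N+K_v/B)$: a naive application of Lemma~\ref{lemma:3sid} at each of the $L$ levels would yield $O(L\log_B N+K/B)$, which exceeds the target. This calls for fractional-cascading-style bridges linking the secondary structures of a parent to those of its children, so that the initial $O(\log_B N)$ navigation is paid only once over the entire root-to-leaf descent. The bridges must also be kept current under buffered updates without breaking the $O(1/B^{\delta})$ amortized bound; however, since they need only be refreshed when a secondary structure is rebuilt during a buffer flush, their maintenance cost can be absorbed into the already-paid flushing budget, leaving the $O(1)$ amortized update bound intact.
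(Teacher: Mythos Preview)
Your framework (base tree with fan-out $\Theta(\log_B N)$, secondary three-sided structures of Lemma~\ref{lemma:3sid} at every node, buffered updates) matches the paper, and your update accounting is essentially right.  The query side, however, takes a detour that creates the very obstacle you then try to repair with fractional cascading.

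Observe that the two three-sided structures you store at a node $v$ already contain \emph{all} points in $v$'s subtree.  Hence, once you have located the splitting node $v$ and its children $v_l,\ldots,v_r$ intersected by $[a,b]$, the restriction of $Q$ to the subtree of $v_l$ is exactly the three-sided range $[a,+\infty)\times[c,d]$, and the restriction to the subtree of $v_r$ is $(-\infty,b]\times[c,d]$.  One call to Lemma~\ref{lemma:3sid} at $v_l$ and one at $v_r$ therefore returns everything below those two children in $O(\log_B N+K/B)$ I/Os; there is nothing to recurse into along $\pi_1,\pi_2$.  The only remaining piece is the ``middle'' children $v_{l+1},\ldots,v_{r-1}$, for which you need all points with $y\in[c,d]$.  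The paper handles this with an auxiliary structure $H(v)$: for every child $v_j$ it stores the pair $(p.y,\mathrm{succ}(p.y,Y(v_j)))$, so a single three-sided query on $H(v)$ locates, for each child, the first point with $y\ge c$, after which the sorted $y$-lists are scanned.  This costs $O(\log_B N+K/B)$ and needs no cascading.  Your proposal never addresses the middle-children case and instead pays for three-sided queries at every one of the $L$ levels, which is precisely what forces you toward fractional cascading; but dynamic external-memory fractional cascading across buffered three-sided structures is not an off-the-shelf tool, and ``absorb the bridge maintenance into the flushing budget'' is where the argument would actually have to be done.

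A smaller point on updates: the paper does not buffer at every base-tree node.  It keeps buffers $\overline{I}(v),\overline{D}(v)$ only at \emph{special} nodes, those whose depth is a multiple of $\lceil\delta\log_2 B/3\rceil$; a flush then pushes a batch through all $O(B^{\delta})$ nodes down to the next special level and updates their secondary structures in bulk.  This is what lets the query procedure bound $|INS(v)|,|DEL(v)|\le B/4$ using only $O(\log_B N)$ buffers on the root-to-$v$ path, and what makes the amortized cost come out to $O(\log_2 N/B^{\delta})=O(1)$.
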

Our data structure  uses the bufferization technique, 
but some additional ideas 
are also needed to retain the $O(\log_B N + K/B)$ query cost and 
achieve the optimal space usage.
We provide the details in the
 Appendix A.

%
%
%

\section{Two-Dimensional Range Reporting for small $B$}
\label{sec:smallb}
It remains to consider the case when the block size $B$ is small. 
In this section we assume that 
$B=O(\log_2^4 N)$ and describe several data structures for this case.

{\bf Reduction to Three-Sided Queries.}
We use the base structure that is similar to structures in~\cite{SR95,ASV99}.
We construct the base tree $T$ with fan-out $\rho=\Theta(\log_2^{\eps} N)$ 
on the set of $x$-coordinates. 
In every node $v$ of $T$ we store the data structures that 
support three-sided queries  $\halfrightsect{a}{+\infty}\times [c,d]$ and 
$\halfleftsect{-\infty}{b}\times [c,d]$ .  The data structures for three-sided 
queries are implemented using the external priority search tree~\cite{ASV99}, 
so that the query and update costs are   $O(\log_B N+K/B)$ and $O(\log_B N)$.  
In every node $v$, we also store a data structure that supports the 
following queries: for any $c<d$ and for any $1\leq i\leq j\leq \rho$, 
we can report all points $p$, such that $p.y \in [c,d]$ and 
$p$ is stored in the child $v_f$ of $v$, $i\leq f\leq j$. 
In~\cite{ASV99} the authors describe a  linear space 
data structure that supports  such queries in $O(\rho +K/B)$ I/Os and 
updates in $O(\log_B N)$ I/Os. 

To answer a query $[a,b]\times [c,d]$, we identify the lowest node $v$, 
such that $[a,b]\subset rng(v)$. Suppose that $[a,b]$ intersects with 
$rng(v_l),rng(v_{l+1}),\ldots rng(v_r)$. We answer three-sided 
queries $\halfrightsect{a}{+\infty} \times [c,d]$
and $\halfleftsect{-\infty}{b}\times [c,d]$ on data structures for nodes 
$v_l$ and $v_r$ respectively. Then, we report all points $p$ with 
$c\leq p.y\leq d$ stored in the nodes $v_{l+1},\ldots, v_{r-1}$. Since 
$\rho=O(\log^{\eps}N)=O(\log_B N)$, the total query cost is 
$O(\log_B N + \frac{K}{B})$.

Since each point is stored in $O(\log_2 N/\log_2 \log_2 N)$ data structures, 
the space usage of structure is $O((N/B)\log_2 N/\log_2\log_2 N)=
O((N/B)\log_2 N/\log_2\log_B N)$ because $B=O(\log_2^4 N)$. 
The update cost is $O(\log_B N(\log_2 N/\log_2 \log_2 N))=O(\log_B^2 N)$. 
Combining this result with Lemma~\ref{lemma:imp2dim}, we obtain the following 
Theorem
\begin{theorem}
\label{theor:2dim1}
There is a data structure that uses $O((N/B)\log_2 N/\log_2\log_B N)$ 
blocks of space and answers orthogonal range reporting queries in two 
dimensions 
in $O(\log_B N+ \frac{K}{B})$ I/O operations. Updates are supported in 
$O(\log_B^{2} N)$ amortized I/Os.
\end{theorem}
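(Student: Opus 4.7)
The plan is to obtain Theorem~\ref{theor:2dim1} by a case split on $B$, combining Lemma~\ref{lemma:imp2dim} with the base-tree reduction just described, and verifying the analysis of the latter.

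For the regime $B = \Omega(\log_2^{4} N)$, the hypothesis $B^\delta \geq 4h_1\log_2 N$ of Lemma~\ref{lemma:imp2dim} holds for some sufficiently small $\delta\leq 1/4$. That lemma alone then yields space $O((N/B)\log_2 N/\log_2\log_B N)$, query cost $O(\log_B N + K/B)$, and update cost $O(1)$, which is subsumed by the claimed $O(\log_B^2 N)$ update bound.

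For the complementary regime $B = O(\log_2^{4} N)$, I would analyze the base tree of fan-out $\rho = \Theta(\log_2^\eps N)$ constructed above. The key arithmetic identity is that $B = O(\log_2^4 N)$ implies $\log_2 B = O(\log_2\log_2 N)$, so $\log_2\log_B N = \Theta(\log_2\log_2 N)$ and $\log_B N = \Theta(\log_2 N/\log_2\log_2 N)$; in particular $\rho = O(\log_B N)$ and the tree has depth $\Theta(\log_2 N/\log_2\log_B N) = O(\log_B N)$. Query cost: descend to the lowest node $v$ with $[a,b]\subseteq \rng(v)$, issue one three-sided query of type $\halfrightsect{a}{+\infty}\times[c,d]$ at $v_l$ and one of type $\halfleftsect{-\infty}{b}\times[c,d]$ at $v_r$ using the external priority search trees (each costing $O(\log_B N + K/B)$), and query the consecutive-children structure among the at most $\rho = O(\log_B N)$ interior children (costing $O(\rho + K/B)$); the total is $O(\log_B N + K/B)$. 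Space: each point appears in the three secondary structures of the $O(\log_2 N/\log_2\log_B N)$ ancestors on its root-to-leaf path, giving $O((N/B)\log_2 N/\log_2\log_B N)$ blocks. Update cost: each secondary structure updates in $O(\log_B N)$ I/Os, and a point propagates along $O(\log_2 N/\log_2\log_B N) = O(\log_B N)$ levels, for amortized $O(\log_B^2 N)$ per update.

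The main obstacle is not conceptual but bookkeeping: the space bound and update bound in the theorem are stated in terms of $\log_2\log_B N$, while the base-tree analysis naturally produces factors of $\log_2\log_2 N$, so care is required to use the hypothesis $B = O(\log_2^4 N)$ to equate the two and to verify that $\rho$ and the tree depth are each $O(\log_B N)$ in this regime, so that neither the interior-children query term nor the update traversal disturbs the stated bounds.
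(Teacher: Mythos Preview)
Your proposal is correct and follows essentially the same approach as the paper: the same case split on $B$, the same invocation of Lemma~\ref{lemma:imp2dim} for large $B$, the same base tree of fan-out $\Theta(\log_2^{\eps}N)$ with external priority search trees and the consecutive-children structure for small $B$, and the same arithmetic converting $\log_2\log_2 N$ to $\log_2\log_B N$ via $B=O(\log_2^4 N)$. The only difference is that you make the depth and $\rho=O(\log_B N)$ estimates slightly more explicit than the paper does.
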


{\bf Reduction to One-Dimensional Queries.}
We can obtain further results by reducing a two-dimensional query 
to a number of one-dimensional queries.
We construct a standard range tree with constant fan-out on the $x$-coordinates 
of points. 
All points that belong to a range of a node $v$ are stored in $v$.
For any interval $[a,b]$, we can find $O(\log_2 N)$ nodes $u^i$, such 
that  $p.x\in [a,b]$ if and only if $p$ is stored in a node $u^i$. 
Hence, all points in the query range $Q=[a,b]\times [c,d]$ can be reported 
by answering a one-dimensional query $Q_y=[c,d]$ in 
$O(\log_2 N)$ nodes $u^1,\ldots,u^t$ of the range tree.
Using the fractional cascading technique, 
we can find the predecessor $d(u^i)$ of $d$ and the successor $c(u^i)$ 
of $c$ in all nodes $u^i$ in 
$O(\log_2 N\log_2\log_2 N)$ time; we refer to e.g.,~\cite{MN90} 
for details.  When we know $c(u^t)$ and $d(u^t)$ we can report 
all elements stored in the node $u^t$ in  $O(K/B)$ I/Os. 
Hence, the query cost is 
$O(\log_2 N\log_2\log_2 N+K/B)=O(\log_ B N (\log_2 \log_B N)^2 +K/B)$. 
Each point is stored in $O(\log_2 N)$ secondary data structures. 
As described in~\cite{MN90}, the range tree augmented with fractional cascading 
data structures can be updated in  $O(\log_2 N (\log_2\log_2N))=
O(\log_BN(\log_2\log_B N)^2)$ time; hence, an update requires 
$O(\log_BN(\log_2\log_B N)^2)$ I/O operations. 
\begin{theorem}
\label{theor:2dim2}
There exists a data structure that uses $O((N/B)\log_2 N)$ blocks of 
space and answers orthogonal range reporting queries in two dimensions 
in $O(\log_B N(\log_2 \log_B N)^2 + \frac{K}{B})$ I/O operations. 
Updates are supported in $O(\log_B N(\log_2\log_B N)^2)$ amortized I/Os.
\end{theorem}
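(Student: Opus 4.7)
The plan is to build a balanced range tree $T$ with constant fan-out on the $x$-coordinates of the point set, giving height $\Theta(\log_2 N)$. At every node $v$ we associate the list $S(v)$ of all points whose $x$-coordinate lies in $rng(v)$, sorted by $y$-coordinate and packed into consecutive blocks. Since each point lies in exactly one node per level, the total space is $O((N/B)\log_2 N)$ blocks. A query $Q=[a,b]\times[c,d]$ is decomposed in the standard way into a disjoint union of ranges of canonical nodes $u^1,\ldots,u^t$ with $t=O(\log_2 N)$, so that $p.x\in[a,b]$ exactly when $p$ is stored in some $u^i$; reporting $Q\cap S$ reduces to answering the one-dimensional query $[c,d]$ on each sorted list $S(u^i)$.

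The key idea is to avoid paying a full search at each canonical node. First I would link the sorted $y$-lists of a node with those of its children by the external-memory variant of fractional cascading of Mehlhorn and N\"aher~\cite{MN90}. Then a single $O(\log_B N)$-I/O descent from the root finds the predecessors of $c$ and the successors of $d$ in the list $S(v)$ for every node $v$ encountered along the way, and by the fractional-cascading invariants these positions are propagated to the $t=O(\log_2 N)$ canonical nodes at an amortized cost of $O(\log_2\log_2 N)$ I/Os per node. Once the bracketing positions $c(u^i),d(u^i)$ are known for every $u^i$, the $K_i$ output points at $u^i$ lie in a contiguous segment of $S(u^i)$ and can be reported in $O(1+K_i/B)$ I/Os; summing gives $O(\log_2 N\log_2\log_2 N+\log_B N+K/B)$, which, using $B=O(\log_2^4 N)$, simplifies to $O(\log_B N(\log_2\log_B N)^2+K/B)$.

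For updates I would insert or delete the point in each of the $O(\log_2 N)$ nodes on its root-to-leaf path, using the dynamic version of external-memory fractional cascading from~\cite{MN90}. That structure handles an insertion or deletion at one level in $O(\log_2\log_2 N)$ amortized I/Os, so the total cost per update is $O(\log_2 N\log_2\log_2 N)=O(\log_B N(\log_2\log_B N)^2)$. Rebalancing the constant-fan-out range tree by the usual weight-balanced scheme contributes only lower-order terms because the substructures at a node on level $\ell$ have size proportional to the weight and are rebuilt with an amortized overhead that sums to $O(\log_2 N)$ per update.

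The main obstacle is verifying that the external-memory fractional cascading structure can be combined with the canonical-decomposition range tree while both achieving the claimed query bound (navigation across $O(\log_2 N)$ canonical nodes with $O(\log_2\log_2 N)$ overhead per node, rather than $O(\log_B N)$) and supporting dynamic updates with the same per-level cost. Once~\cite{MN90} is invoked as a black box for this, the remaining analysis is routine: translating $\log_2$ into $\log_B$ via $B=O(\log_2^4 N)$ yields the stated bounds and completes the proof.
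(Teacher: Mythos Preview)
Your proposal is correct and follows essentially the same approach as the paper: a constant-fan-out range tree on $x$-coordinates, $y$-sorted lists at each node, and dynamic fractional cascading from~\cite{MN90} to locate the bracketing positions in all $O(\log_2 N)$ canonical nodes in $O(\log_2 N\log_2\log_2 N)$ time, which translates to I/Os and then to $O(\log_B N(\log_2\log_B N)^2)$ via $B=O(\log_2^4 N)$. One minor remark: \cite{MN90} is an internal-memory structure, not an external-memory variant; the paper simply uses its RAM time bounds as I/O bounds (trivially valid since each step incurs at most one I/O), relying only on the blocked storage of the sorted lists to achieve the $O(K/B)$ output term.
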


{\bf Range Trees with $B^{O(1)}$ Fan-Out.}
Let $\eps'=\eps/10$. 
If points have integer coordinates, we can reduce the query cost by 
constructing  a range tree with 
fan-out $B^{\eps'}$. 
For every node $v$ and every pair of indexes $i\leq j$, where $v_i$, $v_j$ 
are the children of $v$,  all 
points that belong to the children $v_i,\ldots, v_j$ of $v$ belong 
to a list $L_{ij}(v)$.  
A data structure $E_{ij}(v)$ supports \emph{one-dimensional 
one-reporting} queries on a set of integers. 
That is,  $E_{ij}(v)$  enables us  to find for any 
interval $[c,d]$ some point $p\in L_{ij}(v)$ such that $p.y\in [c,d]$,
if such $p$ exists and if all points have integer  coordinates. 
As described in~\cite{MPP05}, we can implement $E_{ij}(v)$ so that 
queries are supported in $O(1)$ time and updates are supported in 
$O(\log^{\eps'})$ randomized time. Using $E_{ij}(v)$, it is straightforward 
to report all $p\in L_{ij}(v)$ with $p.y\in [c,d]$ in $O(K/B)$ I/Os. 
Consider a query $Q=[a,b]\times [c,d]$. 
We can find in $O(\log_B N)$ I/O operations 
$O(\log_B N)$ nodes $u^t$ and ranges $[i_t,j_t]$, 
so that the $x$-coordinate of a point $p$ belongs to $[a,b]$ if and 
only if $p$ is stored in some list $L_{i_tj_t}(u^t)$.  
Hence, all points in $Q$ can be reported by reporting all points in 
$L_{i_tj_t}(u^t)$ whose $y$-coordinates belong to $[c,d]$.
The total query cost is $O(\log N/\log\log N)=O(\log_B N)$. 
However, the space usage is $O((N/B)\log_2^{1+8\eps'}N)$  because each point is 
stored in $O(B^{2\eps'}\log_B N)=O(\log_2^{1+8\eps'}N)$ lists $L_{ij}(v)$. 
We can reduce the  space usage if  only parts of lists $L_{ij}(v)$  
 stored explicitly. 

Let $L(v)$ denote the list of all points that belong to a node $v$ sorted 
by their $y$-coordinates. We divide $L(v)$ into groups of points 
$G_s(v)$, $s=O(|L(v)|/B^{1+2\eps'})$, so that each 
$G_s(v)$ contains at least $B^{1+2\eps'}/2$ and at most 
$2B^{1+2\eps'}$ points.  
Instead of $L_{ij}(v)$, we store the list $\tL_{ij}(v)$. 
The main idea of our space saving method is that we need  to store 
points of $G_s(v)$ in the list $\tL_{ij}(v)$ 
only in the case when $G_s(v)$ contains a few points from $L_{ij}(v)$.  
Otherwise all relevant points can be found by querying the set 
$G_s(v)$ provided that $\tL_{ij}(v)$ contains a pointer to $G_s(v)$.
Points and pointers are stored in each list $\tL_{ij}(v)$ according to the 
following rules. 
If $|L_{ij}(v)\cap G_s(v)| \leq B/2$, the list $\tL_{ij}(v)$ contains all points 
from $L_{ij}(v)\cap G_s(v)$.  If $L_{ij}(v)\cap G_s(v) \geq  2B$,  
the list $\tL_{ij}(v)$ contains a pointer $\ptr_s$ to $G_s(v)$. 
We also store the minimal and maximal $y$-coordinates of points in 
$L_{ij}(v)\cap G_s(v)$ with each pointer to $G_s(v)$ from $\tL_{ij}(v)$.
If $B/2 < |L_{ij}(v)\cap G_s(v)| < 2B$, $\tL_{ij}(v)$ contains either 
a pointer to $G_s(v)$ or all points   from $L_{ij}(v)\cap G_s(v)$.

Instead of $E_{ij}(v)$, we will use several other auxiliary data 
structures. 
A data structure $\tE_{ij}(v)$ contains information about elements of  
$\tL_{ij}(v)$.  For each  point $p\in \tL_{ij}(v)$ we store $p.y$ in 
$\tE_{ij}(v)$; for every pointer $\ptr_s$, $\tE_{ij}(v)$ contains 
both the minimal and the maximal $y$-coordinate associated with 
$\ptr_s$. 
A data structure $E(v)$ contains the $y$-coordinates of all points 
in $L(v)$. Both $E(v)$ and all $E_{ij}(v)$ support one-reporting 
queries as described above.
A data structure $H_s(v)$  supports orthogonal 
range reporting queries on $G_s(v)$.  Using the data structure described in 
Lemma 1 of~\cite{ASV99}, 
we can answer three-sided reporting queries in $O(K/B)$ I/Os 
using $O(|G_s(v)|/B)$ blocks of space.  Using the standard approach, we can 
extend this result to a data structure that uses $O((|G_s(v)|\log_2 B)/B)$ 
blocks and answers queries in $O(K/B)$ I/Os.  

Now we show how we can report all points $p\in L_{ij}(v)$ with $p.y\in [c,d]$ 
without storing $L_{ij}(v)$.  
We can find an element $e$ of $\tL_{ij}(v)$ with $y$-coordinate in $[c,d]$. 
Suppose that such $e$ is found.
Then, we traverse the list  $\tL_{ij}(v)$ in $+y$ direction starting at $e$ 
 until a point $p$ with $p.y> d$ 
or a pointer to $G_s(v)$ with the minimal $y$-coordinate larger than $d$ 
is found.  
We also traverse $\tL_{ij}(v)$ in $-y$ direction until   a point $p$ with 
$p.y< c$ or a pointer to $G_s(v)$ with the maximal $y$-coordinate smaller 
than $c$ is found.  For every pointer in the traversed portion of 
$\tL_{ij}(v)$, we visit the corresponding group  $G_s(v)$ and 
report all points  $p\in G_s(v)\cap L_{ij}(v)$ with 
$p.y\in [c,d]$. 
All relevant points in $G_s(v)$ can be reported 
in $O(K_s/B)$ I/Os using the data structure $H_s(v)$; here $K_s$ denotes 
the number of points reported by $H_s(v)$. 
By definition of $\tL_{ij}(v)$, a set $G_s(v)\cap L_{ij}(v)$ contains at 
least $B/2$ points  if 
there is a pointer $\ptr$ from $\tL_{ij}(v)$ to $G_s(v)$. 
Unless $\ptr$ is the first or the last element in the traversed portion 
of $\tL_{ij}(v)$, $G_s(v)$ contains $B$ points from $[a,b]\times [c,d]$. 
Since $B$ consecutive elements of the list $\tL_{ij}(v)$ contain 
either $B$ points or at least one pointer to a group $G_s(v)$, the total cost 
of reporting all points in $L_{ij}(v)$ with $p.y\in [c,d]$ is $O(1+K/B)$. 

Now we consider the situation when there is no  $e\in \tE_{îj}(v)$, 
such that $e\in [c,d]$. 
In this case $L_{ij}(v)$ may contain some points from the range $Q$ 
only if all points 
$p\in L(v)$ with $p.y\in [c,d]$ belong to one group $G_s(v)$. 
Using $E(v)$, we search for  a point $p_s\in L(v)$ such that $p_s.y\in [c,d]$. 
If there is no such $p_s$, then $L(v)\cap Q =\emptyset$. 
Otherwise $p_s\in G_s(v)$ and 
we can report all points in $Q\cap G_s(v)$ in $O(1+K/B)$ I/Os 
using $H_s(v)$. 
We need to visit $O(\log_B N)$ nodes of the 
range tree to answer the query; hence, the total query cost is 
$O(\log_B N +K/B)$ I/Os.

Since the lists $L_{ij}(v)$ are not stored, the space usage is reduced to 
$O((N/B)\log_2 N)$: Each list $\tL_{ij}(v)$ contains less than $B$ points 
and at most one pointer for each group $G_s(v)$. Since $L(v)$ is divided  
into $O(|L(v)|/B^{1+2\eps'})$ groups, the total size of 
all $L_{ij}(v)$ is $O(|L(v)|)$. All data structures $H_s(v)$ for all groups 
$G_s(v)$ use $O((|L(v)|\log_2 B)/B)$ blocks of space. Each 
point belongs to  $O(\log_B N)$ nodes; therefore the total space usage 
is $O((N/B)\log_2 N)$. 

When a new point $p$ is inserted, we must insert it into $O(\log_B N)$ lists 
$L(v)$. Suppose that $p$ is inserted into $G_s(v)$ in a node $v$.
We insert $p$ into $H_s(v)$ in $O(\log_2B)$ I/Os;
$p$ is also inserted into up to $B^{2\eps'}=O(\log_2^{8\eps'}N)$ 
lists $\tL_{ij}(v)$. 
The one-dimensional reporting 
data structure for  $\tL_{ij}(v)$ supports updates in 
$O(\log_2^{\eps'}N)$ I/Os; hence, the total cost of inserting 
a point is $O(\log_2^{9\eps'}N)$. 
For each pair $i\leq j$, we check whether the number of points in 
$L_{ij}(v)\cap G_s(v)$ equals  to $2B$. 
Although the list $L_{ij}(v)$ is not stored, we can estimate the number 
of points in $L_{ij}(v)\cap G_s(v)$ by a query to the data structure 
$H_s(v)$. If $|L_{ij}(v)\cap G_s(v)|= 2B$,  we remove all points 
of $\tL_{ij}(v)\cap G_s(v)$ from $L_{ij}(v)$ and insert a pointer to 
$G_s(v)$ into $\tL_{ij}(v)$. Points in a list 
$\tL_{ij}(v)$ are replaced with a pointer to a group $G_s(v)$ at most 
once for a sequence of $\Theta(B)$ insertions into $G_s(v)$. Hence,
 the amortized 
cost of updating $\tL_{ij}(v)$  because the number of points from 
$L_{ij}(v)$ in a group exceeds $2B$ is $O(\log_2^{\eps'}N)$ I/Os.
Each insertion affects $O(\log_2^{8\eps'}N)$ lists $L_{ij}(v)$. 
 If the number of points in $G_s(v)$ 
equals $2B\log_2^{1+2\eps'}N$, we split the group  $G_s(v)$ into 
$G_1(v)$ and $G_2(v)$ of $B\log_2^{1+2\eps'}N$ points each. 
Since up to $B$ elements can be inserted and deleted into every 
list $\tL_{ij}(v)$, the amortized cost incurred by splitting a group 
is $O(\log_2^{9\eps'}N)$. 
Thus the total cost of inserting a point into data structures 
associated with a node $v$ is $O(\log_2^{9\eps'}N)$ I/Os. 
Since a new point is inserted into $O(\log_2 N/\log_2\log_2N)$ nodes 
of the range tree, the total cost of an insertion is 
$O(\log_2^{1+9\eps'}N/\log_2B)=O(\log_B^{1+\eps}N)$.  
Deletions are processed in a symmetric way. 

Combining this result with Lemma~\ref{lemma:2dim}, we obtain the following 
Theorem
\begin{theorem}
\label{theor:2dim3}
Suppose that  point coordinates are integers. There exists a data 
structure that 
uses $O((N/B)\log_2 N)$ blocks of space and answers orthogonal range 
reporting queries in two dimensions 
in $O(\log_B N+ \frac{K}{B})$ I/O operations. 
Updates are supported in $O(\log_B^{1+\eps} N)$ amortized I/Os w.h.p. for any 
$\eps>0$.
\end{theorem}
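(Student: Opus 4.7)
The plan is to verify that the data structure constructed in this section meets the three claimed bounds (space, query I/Os, amortized update I/Os w.h.p.) in the small-$B$ regime, and then to combine with Lemma~\ref{lemma:imp2dim} to cover all block sizes. First I would check correctness of the query procedure. The range tree of fan-out $B^{\eps'}$ decomposes $[a,b]$ into $O(\log_B N)$ canonical pairs $(u^t,[i_t,j_t])$ so that $p.x\in[a,b]$ iff $p\in L_{i_tj_t}(u^t)$ for some $t$. At each node one needs to show that the $\tL_{ij}(v)$-traversal, seeded by a one-reporting lookup in $\tE_{ij}(v)$, reports exactly $L_{ij}(v)\cap Q$; this follows from the construction rule that a group $G_s(v)$ with $|L_{ij}(v)\cap G_s(v)|\geq 2B$ is replaced by a single pointer annotated with the min/max $y$-coordinates of $L_{ij}(v)\cap G_s(v)$, so the traversal always knows when to stop. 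The fallback case, in which $\tE_{ij}(v)$ has no key in $[c,d]$, is handled by a one-reporting query on $E(v)$: any relevant points must then lie in a single group $G_s(v)$, and $H_s(v)$ reports them in $O(1+K/B)$ I/Os.

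Second, I would bound the query I/Os per canonical pair. Charging each traversal step either to $B$ output points or to a pointer that represents $\geq B/2$ output points (with at most two endpoint pointers uncharged), the per-pair cost is $O(1+K_{ij}/B)$, summing to $O(\log_B N+K/B)$. For space, each $\tL_{ij}(v)$ contains at most $B$ explicit points per group plus one pointer per group, so its size is $O(|L(v)|/B^{1+2\eps'}\cdot B + B)$; over $O(B^{2\eps'})$ list pairs this is $O(|L(v)|)$. Adding the $O((|L(v)|\log_2 B)/B)$ blocks used by the $H_s(v)$ structures and summing over the $O(\log_B N)$ nodes in which each point is stored, the total is $O((N/B)\log_2 N)$.

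The main obstacle is the amortized update analysis, which must absorb three separately amortized costs while matching the $O(\log_B^{1+\eps}N)$ target w.h.p. Per node visited, a single insertion (a) updates $H_s(v)$ and $E(v)$ at cost $O(\log_2 B)$, (b) updates up to $B^{2\eps'}$ one-reporting structures $\tE_{ij}(v)$ using the randomized construction of~\cite{MPP05} at $O(\log_2^{\eps'}N)$ I/Os each, (c) amortizes the cost of converting an explicit group-segment of $\tL_{ij}(v)$ into a pointer once its size reaches $2B$, which happens at most once per $\Theta(B)$ insertions into $G_s(v)$, and (d) amortizes the cost of splitting $G_s(v)$ when it reaches $2B\log_2^{1+2\eps'}N$ points. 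With $\eps'=\eps/10$, summing all four contributions and multiplying by the $O(\log_2 N/\log_2 \log_2 N)$ nodes touched per insertion yields $O(\log_2^{1+9\eps'}N/\log_2 B)=O(\log_B^{1+\eps}N)$; the probabilistic qualifier is inherited from~\cite{MPP05}. Deletions are symmetric.

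Finally, I would invoke Lemma~\ref{lemma:imp2dim} for the case $B=\Omega(\log_2^4 N)$: there the target bounds are achieved deterministically with update cost $O(1)$, dominated by the small-$B$ bound. The two regimes together establish the theorem for all $B$.
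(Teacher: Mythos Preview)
Your proposal is correct and follows essentially the same approach as the paper: the range tree with fan-out $B^{\eps'}$, the $\tL_{ij}(v)$ lists with the point/pointer threshold rule, the one-reporting structures $\tE_{ij}(v)$ and $E(v)$ from~\cite{MPP05}, the per-group structures $H_s(v)$, and the four-way update accounting all match the paper's argument. The only minor discrepancy is that the paper combines the small-$B$ construction with Lemma~\ref{lemma:2dim} rather than Lemma~\ref{lemma:imp2dim}; since both lemmas give $O(1)$ amortized update cost for $B=\Omega(\log_2^4 N)$ and space within the claimed $O((N/B)\log_2 N)$, your choice works equally well.
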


\section*{Appendix A. Two-Dimensional Range Reporting for
 $B=\Omega(\log_2^4 N)$ }
We maintain a constant fan-out tree $T$ on the set of $x$-coordinates 
of all points. An internal node of $T$ has at most eight children. 
 A point $p$ \emph{belongs} to an internal 
node $v$, if its $x$-coordinate is stored in a leaf descendant of $v$.
We assume that the height of $T$ is bounded by $h_1\log_2 N$.  
Each node $v$ contains two secondary data structures that support 
 three-sided queries of the 
form $\halfrightsect{a}{+\infty}\times [c,d]$ and 
$\halfleftsect{-\infty}{b}\times [c,d]$ respectively; both 
data structures contain all points that belong to $v$.  
We also store all points that belong to $v$ in a B-tree sorted by
their  $y$-coordinates, 
so that all points with $y$-coordinates in an interval $[c,d]$ can 
be reported. The data structures for three-sided queries 
are implemented as described in Lemma~\ref{lemma:3sid}. 
We implement the B-tree using the result of~\cite{A03}, so that updates 
are supported in $O(1/B^{\delta})$ I/Os. 
Hence, updates on the secondary data structures are supported in 
$O(1/B^{\delta})$ I/Os.

We say that a node $v$ of $T$ is \emph{special} if the depth of $v$ is 
divisible by $\ceil{\delta\log_2 B/3}$. To facilitate the query processing, 
buffers with inserted and deleted elements will be stored 
 in the special nodes only. 
A node $u$ is a \emph{direct special descendant} of $v$ if $u$ is a special 
node, $u$ is a descendant of $v$, and there is no other special node $u'$ on the
path from $v$ to $u$. We denote by $\desc(v)$ the set of direct special descendants of a node $v$. The set of nodes $\subs(v)$ consists of the node $v$ 
and all nodes $w$, such that $w$ is a descendant of $v$ and $w$ is an 
ancestor of some node $u\in desc(v)$. In other words, every node $w$ 
on a path from $v$ to one of its direct special descendants belongs to 
$\subs(v)$; the node $v$ also belongs to $\subs(v)$.

Let $\oI(v)$ and $\oD(v)$ denote the buffers of inserted and deleted 
points stored in a node $v\in T$. When a point is inserted, we add 
it to the buffer $\oI(v_R)$, where $v_R$ is the root of $T$.  
When a buffer $I(v)$ contains at least $B^{2\delta}$ elements, we visit 
every node $w\in \subs(v)$ and insert all points $p\in I(v)\cap rng(w)$ 
into the secondary data structures of a node $w$. 
Then, we examine all nodes $u\in \desc(v)$. For every $u\in \desc(v)$, 
we insert all points $p\in \oI(v)\cap rng(u)$ into 
$\oI(u)$ and remove all points $p\in (\oI(v)\cap rng(u))\cap \oD(u)$ 
from $\oD(u)$.  Finally, we set $I(v)=\emptyset$. 

The total number of nodes in $\subs(v)$ and $\desc(v)$ is $O(B^{\delta})$. 
Since each point is inserted into  $O(\log_2 B)$ data structures 
and the total number of points is $O(B^{2\delta})$, 
all data structures in $\subs(v)$ can be updated in 
$O(B^{2\delta}\log_2 B/B^{\delta}) =O(B^{\delta}\log_2 B)$ I/Os. 
We can also update the buffers 
$\oI(u)$ and  $\oD(u)$ for each $u\in \desc(v)$ in $O(1)$ I/Os. 
Hence, a buffer $I(v)$ can be emptied in $O(B^{\delta}\log_2 B)$ I/Os. 
Since a buffer $I(v)$ is emptied once after $\Theta(B^{2\delta})$ points were 
inserted into $I(v)$, the amortized cost of an insertion into $I(v)$ 
is $O(\log_2 B/B^{\delta})$. 
An insertion of a point $p$ into the data structure 
leads to insertions of $p$ into $O(\log_B N)$  buffers $I(v)$.
Hence, the amortized cost of inserting a point $p$ is 
$O(\log_2 N/B^{\delta})=O(1)$.
Deletions can be processed with a simmetric procedure.

Consider a query $Q=[a,b]\times [c,d]$. We identify the node $v$ of $T$ 
such that $[a,b]\subset rng(v)$, but $[a,b]\not\subset rng(v_i)$ for any 
child $v_i$ of $v$. Suppose that $[a,b]$ intersects with $rng(v_l)$, $\ldots$,
$rng(v_r)$ where $1\leq l\leq r\leq 4$. All points $p\in S\cap Q$ are stored 
in the secondary structures of nodes $v_l,\ldots, v_r$ or in 
buffers of the special ancestors of $v$ (possibly 
including the node $v$ itself).  
We start by constructing sets   $INS(v)$ and $DEL(v)$. 
The set $INS(v)$ contains all points $p$ such that $p\in \oI(w)$ for an 
ancestor $u$ of $v$, but $p\not\in \oD(u')$ for an ancestor $u'$ of $u$. 
The set $DEL(v)$ contains all points $p$ such that $p\in \oD(w)$ for an 
ancestor $u$ of $v$, but $p\not\in \oI(u')$ for an ancestor $u'$ of $u$.
Only $O(\log_B N)$ ancestors of $v$ are special nodes and every buffer 
stored in a special node contains at most $B^{2\delta}$ points.    
Hence, both $INS(v)$ and $DEL(v)$ can be constructed in $O(\log_B N)$ I/Os 
and contain $h_1\cdot B^{2\delta}\log_B N \leq B/4$ points.
We output all points of $p\in INS(v)\cap Q$ in $O(1)$ I/Os.  
Let $\cV$ be the list of all points $p$, such that $p$ belongs to 
$Q$ and $p$ is stored in a child of $v$. The list $\cV$ can be generated as 
follows.
First, we answer three-sided queries $\halfrightsect{a}{+\infty} \times [c,d]$
and $\halfleftsect{-\infty}{b}\times [c,d]$ on data structures for nodes 
$v_l$ and $v_r$ respectively. Then, we identify all points $p$ stored in 
a node $v_j$, $l<j<r$, such that $c\leq p.y\leq d$. 
When the list $\cV$ is constructed, we traverse $\cV$ and output all points 
of $\cV$ that do not belong to the set $DEL$. 
The list $\cV$ can be generated and traversed in 
$O(\log_B N +\frac{|\cV|}{B})$ I/Os. Since the total number of points in the
answer is $K\geq |\cV|-B/4$, all points of $\cV\setminus DEL$ can be identified 
and reported in $O(\log_B N +\frac{|K|}{B})$ I/Os.

Our result is summed up in the following lemma
\begin{lemma}\label{lemma:2dim}
Suppose that $B^{\delta}\geq 4\log_2 N$ for a constant $\delta\leq 1/4$.
Then there exists a data structure that uses $O((N/B)\log_2 N)$ blocks of space 
and answers two-dimensional orthogonal range  reporting 
queries in $O(\log_B N + K/B)$ I/Os. 
The amortized cost of inserting or deleting a  point 
is $O(1)$.
\end{lemma}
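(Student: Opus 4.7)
The plan is to verify the three claims of the lemma---space, query cost, and amortized update cost---for the buffered tree $T$ described above, relying on Lemma~\ref{lemma:3sid} and the buffered B-tree of~\cite{A03} as black boxes. For the \textbf{space bound}, I would observe that each point lives in exactly one node per level of the constant-fan-out tree, hence in $O(\log_2 N)$ secondary structures. At each such node, the two three-sided structures and the $y$-sorted B-tree each use space linear in the number of points stored, so summing over the $O(\log_2 N)$ levels gives $O((N/B)\log_2 N)$ blocks. The buffers $\oI(v)$ and $\oD(v)$ exist only at the $O(N/B)$ special nodes and hold at most $B^{2\delta} = o(B)$ points each, contributing only a lower-order term.

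For the \textbf{amortized update cost}, I would charge the work of each buffer flush. When $\oI(v)$ fills to $\Theta(B^{2\delta})$ points, we distribute them over the $O(B^\delta)$ nodes of $\subs(v)\cup\desc(v)$. Each of the $O(\log_2 B)$ nodes of $\subs(v)$ between consecutive special levels receives up to $B^{2\delta}$ points, inserted into three secondary structures at cost $O(1/B^\delta)$ I/Os per update by Lemma~\ref{lemma:3sid} and~\cite{A03}; buffer updates at nodes in $\desc(v)$ cost $O(1)$. A full flush therefore costs $O(B^{2\delta}\log_2 B / B^\delta)=O(B^\delta\log_2 B)$ I/Os, i.e.\ $O(\log_2 B / B^\delta)$ amortized per point per special level. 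Since a point crosses $O(\log_B N)$ special levels, the total amortized cost is $O(\log_B N \cdot \log_2 B / B^\delta) = O(\log_2 N / B^\delta) = O(1)$ by the hypothesis $B^\delta \geq 4\log_2 N$; deletions are symmetric.

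For the \textbf{query}, given $Q=[a,b]\times[c,d]$ I would locate the lowest $v\in T$ with $[a,b]\subset\rng(v)$ and assemble the candidate list $\cV$ by issuing two three-sided subqueries at the boundary children and $y$-range B-tree queries at the $O(1)$ middle children, at cost $O(\log_B N + |\cV|/B)$. The buffered corrections live only on the $O(\log_B N)$ special ancestors of $v$, each holding at most $B^{2\delta}$ points, so the effective pending-insertion set $INS(v)$ and pending-deletion set $DEL(v)$ can be built in $O(\log_B N)$ I/Os and have total size at most $h_1 B^{2\delta}\log_B N \leq B/4$ by the hypothesis. The \textbf{main obstacle} is checking that the deletion filter does not inflate the output cost: since $|DEL(v)|\leq B/4$ fits in $O(1)$ blocks, sieving $\cV$ against it adds only $O(|\cV|/B)$ I/Os, and at most $B/4$ points of $\cV$ can be suppressed by $DEL(v)$, so $|\cV|\leq K+B/4$ charges cleanly to $O(\log_B N + K/B)$. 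Reporting $INS(v)\cap Q$ directly then completes the answer within the claimed bound.
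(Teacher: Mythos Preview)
Your proposal is correct and mirrors the paper's own argument in Appendix~A essentially step for step: same buffered constant-fan-out tree, same flush-cost accounting $O(B^{2\delta}\log_2 B / B^{\delta})$ per flush amortized over $\Theta(B^{2\delta})$ points across $O(\log_B N)$ special levels, and the same query routine with $|DEL(v)|\le B/4$ guaranteeing $|\cV|\le K+B/4$. One small wording slip: $\subs(v)$ contains $O(B^{\delta})$ nodes in total, not $O(\log_2 B)$; what is $O(\log_2 B)$ is the number of nodes on any single point's path through $\subs(v)$, and that is the quantity your cost calculation actually (and correctly) uses.
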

We can slightly improve the space usage by increasing the fan-out of the 
base tree. Our construction is the same as above, but every internal 
node has $\Theta(\log_B N)$ children. We  also store an additional data 
structure $H(v)$ in every internal node $v$ of $T$. 
For any $l< r$ and any $c\leq d$, $H(v)$ enables us to efficiently 
 report all points $p$, 
such that $p.y\in [c,d]$ and $p$ is also stored in a child  $v_j$ of $v$ 
for $l<j <r$.
The data structure $H(v)$ is described below.

Let $L(v)$ denote the list of all points that belong to $v$.
Let $Y(v)$ be the set that contains  $y$-coordinates of all points in $L(v)$. 
For every point $p\in L(v_l)$ and for all children $v_l$ of $v$, 
$H(v)$ contains a ``point'' $\tau(p)=(p.y, succ(p.y,Y(v_l)))$. 
For a query $c$, $H(v)$ returns  all points $p\in L(v)$ such that
$\tau(p)\in \halfleftsect{-\infty}{c} \times \halfrightsect{c}{+\infty}$. 
In other words, we can report all points $p\in L(v)$ such that
$p.y\leq c$ and   $succ(p.y,Y(v_l)\geq c$. An answer to query 
contains $O(\log_B N)$ points; at most one point for each child $v_l$. 
Using Lemma~\ref{lemma:3sid},
$H(v)$ supports queries and updates in $O(\log_B N +K)=O(\log_B N)$ I/Os
 and updates in $O(1/B^{\delta})$ amortized I/Os respectively.  
 
We can report all points $p\in L(v_j)$ such that $p.y\in [c,d]$ and 
$l<j< r$ as follows. Using $H(v)$, we search for  all points $p$, such 
that $p.y\leq c$ and $succ(p.y,Y(v_l)\geq c$. For every found $p$, 
$l<j<r$, we traverse the list $L(v_j)$ and report all points 
that follow $p$ until a point $p'$, $p'.y>d$, is found. 
The total query cost is $O(\log_B N + K/B)$. 


The global data structure  supports insertions and deletions of points  in the same way 
as shown in Lemma~\ref{lemma:2dim}.  The query answering procedure is 
also very similar to the procedure in the proof of Lemma~\ref{lemma:2dim}. 
We identify the node $v$ of $T$ 
such that $[a,b]\subset rng(v)$, but $[a,b]\not\subset rng(v_i)$ for any 
child $v_i$ of $v$. We also find the children $v_l,\ldots, v_r$ of $v$ such 
that $[a,b]$ intersects with $rng(v_l)$, $\ldots$, $rng(v_r)$. 
The  sets $INS(v)$, $DEL(v)$, and the list $\cV$  can be generated as 
described above.  
The only difference is that we identify all points $p$ stored in 
nodes $v_j$, $l<j<r$, such that $c\leq p.y\leq d$ using the data structure 
$H(v)$.

\end{document}